\newtheorem{theorem}{Theorem}
\newtheorem{defi}{Definition}
\theoremstyle{definition}
\def\R{\mathbb{R}}
\def\dmat{X}  
\def\msg{M}
\newcommand{\expv}{\mathbb{E}}
\DeclareMathOperator*{\argmax}{argmax}
\DeclareMathOperator*{\argmin}{argmin}
\def\corr{r}
\def\nx{n_1}
\def\ny{n_2}
\def\t{Q}  
\def\k{T_0}  
\def\lowerdim{q}  
\newcommand{\indep}{\perp\!\!\!\perp}
\begin{document}
\title{Message-Relevant Dimension Reduction of Neural Populations} 


\author{%
  \IEEEauthorblockN{Amanda Merkley and Pulkit Grover}
  \IEEEauthorblockA{Department of Electrical \& Computer Engineering\\
                    Carnegie Mellon University, Pittsburgh, USA\\
                    Email: \{amerkley, pgrover\}@andrew.cmu.edu
                    }
  \and
  \IEEEauthorblockN{Alice Y.~Nam and Y.~Kate Hong}
  \IEEEauthorblockA{Department of Biological Sciences\\ 
                    Carnegie Mellon University, Pittsburgh, USA\\
                    Email: \{aynam, katehong\}@andrew.cmu.edu
                    }
}


\maketitle


\begin{abstract}
    Quantifying relevant interactions between neural populations is a prominent question in the analysis of high-dimensional neural recordings. However, existing dimension reduction methods often discuss communication in the absence of a formal framework, while frameworks proposed to address this gap are impractical in data analysis. This work bridges the formal framework of $\msg$-Information Flow with practical analysis of real neural data. To this end, we propose Iterative Regression, a message-dependent linear dimension reduction technique that iteratively finds an orthonormal basis such that each basis vector maximizes correlation between the projected data and the message. We then define `$\msg$-forwarding' to formally capture the notion of a message being forwarded from one neural population to another. We apply our methodology to recordings we collected from two neural populations in a simplified model of whisker-based sensory detection in mice, and show that the low-dimensional $M$-forwarding structure we infer supports biological evidence of a similar structure between the two original, high-dimensional populations.
\end{abstract}

\section{Introduction}
Noisy, high-dimensional data are collected in abundance in neuroscience experiments. It is now common to record tens to hundreds of neurons from multiple neural populations~\cite{steinmetz2018challenges, stevenson2011advances}, where each population is a functionally distinct group of neurons in the brain. These advances are enabling neuroscientists to ask questions about communication and information flow in real neural data to gain insight into population-level interactions. There is also a parallel interest to address these questions using information-theoretic formalisms~\cite{kim2014dynamic, schamberg2019measuring, quinn2017bounded}. A specific direction we took in earlier works~\cite{venkatesh2019should, venkatesh2020else, venkatesh2020information} was ``$M$-Information Flow'', a formal framework in which we defined information flow of $M$, a message of interest, and provided a technique to track the flow of $M$ through a network in time.

However, the rigorous definitions we laid out in~\cite{venkatesh2020information} are impractical in the analysis of noisy, high-dimensional neural datasets. To address this issue, we consider the problem of message-relevant dimension reduction from first principles. We seek an estimable representation of neural activity that summarizes information about the message, $\msg$, in neural data based on a measure of $\msg$-relevance. We propose \textit{Iterative Regression} (IR) (Sec.~\ref{sec:IR}) as a linear dimension reduction technique that uses sample correlation between $\msg$ and the projected data as the measure of $\msg$-relevance. IR iteratively finds an orthonormal basis to represent the neural data by computing the linear projection that maximizes correlation with the message.


Our second contribution is inference of a low-dimensional communication structure by conditional dependence testing using data reduced with IR. Inspired by $M$-Information Flow, we define `$\msg$-forwarding' in Sec.~\ref{sec:relay} as a conditional independence relationship to formally capture the notion of how a message $\msg$ is forwarded from one neural population to another. While $\msg$-forwarding generally describes a relationship between neural populations (i.e., vectors of activity), we estimate this quantity with one-dimensional (1D) representations of neural populations (i.e., scalars). Inferring an $\msg$-forwarding structure using IR-reduced data not only overcomes the challenge of directly applying the definition to high-dimensional datasets, but can also reveal the existence (or absence) of a low-dimensional representation of $\msg$-forwarding.

To assess the performance of our approach on real data, we analyze a simplified network of the whisker-based sensory detection system in mice. We first apply IR to neural population data we acquired and show that IR captures strong correlation with $\msg$ in each population (Sec.~\ref{sec:pca_ir_1pop}). We compare the correlations obtained from IR to those from demixed PCA (dPCA)~\cite{kobak2016demixed} and model-based Targeted Dimension Reduction (mTDR)~\cite{aoi2020prefrontal}, two techniques developed for message-dependent dimension reduction in neuroscience. We also compare to a denoising sparse autoencoder (DSAE). The DSAE of~\cite{qiu2018denoising} is a shallow autoencoder, suitable for our dataset size, used in neural signal classification. We adapt this model to see if nonlinear dimension reduction provides an advantage in spite of not explicitly accounting for $\msg$. We find that IR maintains clear $\msg$-relevance, while it is diminished in other methods (Sec.~\ref{sec:pca_ir_1pop}).

Next, we use conditional dependence testing to determine if 1D representations of both neural populations do \emph{not} harbor an $\msg$-forwarding structure (Sec.~\ref{sec:infer_m_comm}). These tests allow us to infer a direct pathway between $\msg$ and a dimension-reduced neural population, and that the message is not $\msg$-forwarded from the other reduced population. Using IR to obtain the 1D representation of each area allows us to conclude dependence of $\msg$ with both of the low-dimensional neural populations as well as a direct pathway between $\msg$ and one of the IR-reduced populations, suggesting a low-dimensional structure supported by previous anatomical evidence~\cite{hong2018sensation}. Furthermore, most other methods do not even detect unconditional dependence with one population. Only dPCA detects the same communication structure as IR, and to less confidence than IR.

\section{Linear Dimension Reduction}
\subsection{A measure of $\msg$-relevance}
We first address the problem of obtaining dimensions relevant to the message, $\msg$, from an $n$-dimensional dataset\footnote{In the rest of this work, we assume $\msg$ and $\dmat$ are mean-centered without loss of generality.}, $\dmat$. Naturally, the choice of dimension reduction depends on how relevance is measured. In this work, we measure $\msg$-relevance as sample correlation of the reduced-dimensional data with $\msg$ since correlation is both estimable and interpretable. As such, we seek projections of the data that retain the most correlation with $\msg$. The first projection is
\begin{align}\label{eq:max_corr}
    v_1 = \argmax_{u \in \R^n} r(u^T\dmat, \msg)
\end{align}
where $r(u^T\dmat, \msg)$ is the sample correlation between $u^T\dmat$ and $\msg$. In fact, we show in Appendix~\ref{app:a} that~\eqref{eq:max_corr} is equivalent to finding the linear minimum mean square error (LMMSE) estimate of $\msg$, to which the solution is ordinary least squares regression, i.e., $v_1 = (XX^T)^{-1}XM = \Sigma_X^{-1} \Sigma_{XM}$. When $X$ and $M$ are jointly Gaussian, the regression vector is also a sufficient transformation of $X$, which we show in Theorem \ref{thm:sufficiency} in Appendix~\ref{app:a}.

However, the regression vector may not be sufficient for representing neural data in general. As such, $\msg$ may not be recoverable by a single linear projection of $X$. A key observation is the existence of nonzero correlations in other dimensions. In other words, data projected onto $u \neq v_1$ may still be significantly correlated to $\msg$. Indeed, for $\expv[u^TXM] = 0$ to hold, the projection vector $u$ must be orthogonal to $\Sigma_{XM}$. This motivates IR as a method for recovering an orthonormal basis such that each basis vector iteratively maximizes $\msg$-relevance in decreasing dimensions after removing contributions from previous regression vectors. We consider linear projections to align with the dominant form of dimension reduction on neural data~\cite{cunningham2014dimensionality}. The focus on linear dimension reduction in neuroscience is partly due to the statistical versatility of linear techniques. Each dimension is also interpretable as a `conceptual neuron' that linearly combines real neurons to summarize some aspect of the data.

\subsection{Iterative Regression}\label{sec:IR}
We now describe IR in detail as a technique for extracting $\msg$-relevant projections one dimension at a time. Suppose $\dmat\in \R^{n \times \t \times \k}$ is the dataset of $n$ neurons recorded for $\k$ time points over $\t$ trials, where a trial is a time series of neural activity in response to a randomized stimulus. Let $\msg \in \R^\t$ be the message sampled over $\t$ trials. Define $\dmat_n \in \R^{n \times \t}$ to be a slice of the dataset at one time point. The first IR dimension, $v_1^*$, is the normalized regression vector from~\eqref{eq:max_corr}, i.e., $v_1^* = v_1 / \|v_1\|_2$. We assume $n < \t$, an assumption usually satisfied in experiments and is otherwise resolved by removing unresponsive neurons.

IR builds an orthonormal basis from the frame of reference of the first regression vector by finding correlations in a reduced $\lowerdim$-dimensional space after removing contributions from the previous $(n-\lowerdim)$ IR dimensions. To illustrate this process for the second IR dimension, consider the dataset after removing the contributions from $v_1^*$:
\begin{align*}
    \dmat_{n-1} = \dmat_n - v_1^*(v_1^*)^T\dmat_n.
\end{align*}
Whereas $\dmat_n$ was the dataset of real neural activity, $\dmat_{n-1}$ is now interpreted as the activity of conceptual neurons. By removing a column from $\dmat_{n-1}$ (the redundant conceptual neuron), the regression vector for the lower-dimensional space is again computed for the reduced set of conceptual neurons to find the next IR dimension $v_2 \in \R^{n-1}$. Then, $v_2$ is normalized, projected back into the $n$-dimensional space, and orthogonalized with respect to $v_1^*$. This process, described in Algorithm \ref{algo:ir}, is repeated for all $n$ IR dimensions at each time.
\begin{algorithm}
\caption{Iterative Regression}
\label{algo:ir}
\begin{algorithmic}[1]
    \Require $\dmat \in \R^{n \times \t \times \k}$, $\msg \in \R^\t$
    \State $M \gets M - \bar{M}$ \Comment{Mean center}
    \For{$t = 1, ..., \k$}
        \For{$i = 1, ..., n$}
            \State $\dmat_0 \gets \dmat(t) - \bar{\dmat}(t)$ \Comment{Mean center wrt $n$-axis}
            \For{$1 \leq j < i$}
                \State $s_j \gets v_j(t)^T\dmat_0$
                \State $s_j \gets s_j / \|s_j\|_2$
                \State $\dmat_0 \gets \dmat_0 - \dmat_0 s_j^T s_j$
            \EndFor
            \State $v_q \gets (\dmat_0\dmat_0^T)^{-1} \dmat_0 \msg$
            \State $v_{GS} \gets$ Gram-Schmidt($v_q, \{\varnothing, ..., v_{i-1}(t)\}$)
            \State $v_i(t) \gets v_{GS} \;/\; \|v_{GS}\|_2$
        \EndFor
    \EndFor
\end{algorithmic}
\end{algorithm}

Since neural spiking data is often collected as a $n\times \t \times \k$ tensor, the dataset must be converted into a matrix so it is amenable to dimension reduction with IR. A common technique for reshaping the tensor is to average over trials so the new dataset $X_{average} \in \R^{n\times \k}$ indexes neurons and time~\cite{chapin1999principal}. Dimension reduction is then performed on $X_{average}$. This operation effectively treats each time point as an independent sample of neural activity, when, by design of the experiment, activity within each trial is not even stationary. 

Instead, we apply dimension reduction per time point. This is to better respect the nonstationary nature of the trial, acknowledging that there may be dependence between time points. Furthermore, IR is a correlation-maximizing method applied to neurons that may be anti-correlated with $\msg$. Applying IR on the averaged dataset $X_{average}$ returns dimensions that maximize correlation over the whole trial, which may actually result in dimensions that are the most \textit{uncorrelated} representation of $\msg$. By performing dimension reduction per time point, the effect of anti-correlated neurons is accounted for simply by flipping signs at each time. 


\subsection{Related methods: Linear dimension reduction}
IR is related to several classes of linear dimension reduction techniques, including principal component analysis (PCA), canonical correlation analysis (CCA), and partial least squares (PLS) regression. To see this, consider the IR, PCA, CCA, and PLS objectives for finding the first dimension\footnote{We assume that $M \in \R$ and $X \in \R^n$. $\Sigma_X\in\R^{n \times n}$ is the covariance of $X$ and $\Sigma_{MX}\in\R^{1 \times n}$ is the cross covariance between $M$ and $X$.}:

\begin{itemize}
    \item IR objective
    \begin{align}
    \max_u \frac{\Sigma_{MX}u}{\sqrt{\sigma_m^2} \sqrt{u^T\Sigma_X u}}
    \end{align}
    \item PCA objective~\cite{bishop2006pattern}
    \begin{align}
        \max_u \frac{u^T\Sigma_X u}{u^Tu}
    \end{align}
    \item CCA objective~\cite{hardle2015canonical}
    \begin{align}
        \max_{v, u} \frac{v^T\Sigma_{MX}u}{\sqrt{v^T\sigma_m^2 v} \sqrt{u^T\Sigma_X u}} \label{obj:cca}
    \end{align}
    \item PLS objective~\cite{hoskuldsson1988pls}
    \begin{align}
        \max_{v, u} \frac{v^T\Sigma_{MX}u}{\sqrt{v^T v} \sqrt{u^T u}}
    \end{align}
\end{itemize}

We group the techniques by how they treat the message. IR and PLS are both message-dependent dimension reduction techniques, while PCA and CCA are not message-dependent. Although~\eqref{obj:cca} suggests that CCA is also message-dependent through $\Sigma_{MX}$, CCA is ill-suited for dimension reduction of $X$ when $M$ is 1D. Standard CCA was developed for dimension reduction of two homogeneous objects (e.g., both variables represent neural activity) unlike PLS, which was developed for robust regression~\cite{wold1984collinearity} (e.g., one variable is a stimulus and another is neural activity). The algorithmic description of CCA and PLS manifests this difference: the max number of dimensions standard CCA can find is limited to the minimum of dimensions of the two variables~\cite{bach2005probabilistic} (i.e. one dimension for a 1D message) while PLS is a sequential algorithm that squeezes out covariance after successive `deflations' of the data matrix $X$ (i.e. up to $n$ dimensions).

Although PCA is not message-dependent, $\msg$-relevance can still be assigned to each Principal Component (PC). In fact, PCA is related to IR by the projection of each PC onto the first IR dimension. For a given PC, the magnitude of this projection, scaled by the PC's explained variance, is proportional to its $\msg$-relevance. Theorem \ref{thm:pca_ir} summarizes this relationship.

\begin{theorem}\label{thm:pca_ir}
    Let $\dmat \in \R^{n \times \t}$ be the mean-centered dataset and $\msg\in \R^\t$, the mean-centered message vector. Suppose $v = (\dmat\dmat^T)^{-1}\dmat M$ is the regression vector and $\{p_i\}_{i=1}^n$ is the set of principal components of $\Sigma_X \propto \dmat\dmat^T$. Then,
    \begin{align}
        r(p_i^T\dmat, \msg) = \alpha \sqrt{\lambda_i} \ \langle v^*, p_i \rangle,
    \end{align}
    where $r(p_i^T\dmat, \msg)$ is the sample correlation coefficient between $p_i^T\dmat$ and $\msg$, $\lambda_i$ is the eigenvalue associated to $p_i$, $v^*$ is the first IR dimension, and $\alpha = \|v\| / s_\msg$ is a constant with $s_\msg$ being the standard deviation of $\msg$.
\end{theorem}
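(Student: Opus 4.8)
The plan is to prove the identity by directly computing the sample correlation and substituting two facts: the normal equations satisfied by the regression vector $v$, and the eigendecomposition of $\dmat\dmat^T$. First I would note that since $\dmat$ is mean-centered across trials, the projection $p_i^T\dmat$ is a linear combination of mean-centered rows and is therefore itself mean-centered, as is $\msg$ by hypothesis. Consequently the sample correlation collapses to a normalized inner product,
\begin{align*}
    r(p_i^T\dmat, \msg) = \frac{(p_i^T\dmat)\msg}{\|p_i^T\dmat\|_2\,\|\msg\|_2},
\end{align*}
and I would handle the numerator and the two norms in the denominator separately.

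The key step is the numerator. Rather than expanding $v$ explicitly, I would use that $v = (\dmat\dmat^T)^{-1}\dmat\msg$ is exactly the solution of the normal equations, so $\dmat\msg = (\dmat\dmat^T)v$. Letting $\mu_i$ denote the eigenvalue of $\dmat\dmat^T$ associated to the unit-norm principal component $p_i$ and writing $v = \|v\|\,v^*$, the eigen-relation $p_i^T\dmat\dmat^T = \mu_i p_i^T$ gives
\begin{align*}
    (p_i^T\dmat)\msg = p_i^T(\dmat\dmat^T)v = \mu_i\, p_i^T v = \mu_i\|v\|\,\langle v^*, p_i\rangle.
\end{align*}
The same eigen-relation settles the first norm, $\|p_i^T\dmat\|_2^2 = p_i^T\dmat\dmat^T p_i = \mu_i$, while $\|\msg\|_2 = \sqrt{N}\,s_\msg$ with $N$ the sample-size constant ($\t$ or $\t-1$) used to define the variance.

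The last step is bookkeeping with the normalization. Because $\Sigma_X \propto \dmat\dmat^T$ under the same convention that defines $s_\msg$, i.e. $\Sigma_X = \tfrac{1}{N}\dmat\dmat^T$, the eigenvalues satisfy $\mu_i = N\lambda_i$. Assembling the pieces,
\begin{align*}
    r(p_i^T\dmat, \msg) = \frac{\mu_i\|v\|\,\langle v^*, p_i\rangle}{\sqrt{\mu_i}\,\sqrt{N}\,s_\msg} = \frac{\sqrt{N}\sqrt{\lambda_i}\,\|v\|\,\langle v^*, p_i\rangle}{\sqrt{N}\,s_\msg} = \alpha\sqrt{\lambda_i}\,\langle v^*, p_i\rangle,
\end{align*}
where the $\sqrt{N}$ factors cancel to leave $\alpha = \|v\|/s_\msg$. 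The computation is otherwise routine; the only things to be careful about are invoking the normal equation $\dmat\msg = \dmat\dmat^T v$ instead of manipulating the inverse, and using a \emph{consistent} normalization for $\Sigma_X$ and $s_\msg$ so that the sample-size constants cancel rather than surviving as a spurious factor. I do not expect any genuine obstacle beyond this bookkeeping.
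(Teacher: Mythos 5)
Your proposal is correct and follows essentially the same route as the paper's proof: both reduce the sample correlation to $\frac{p_i^T\dmat\msg}{\|p_i^T\dmat\|\,\|\msg\|}$, invoke the normal equations $\dmat\msg = \dmat\dmat^T v$ together with the eigen-relation for $p_i$, and cancel the sample-size normalization constants to arrive at $\alpha\sqrt{\lambda_i}\,\langle v^*, p_i\rangle$. The only cosmetic difference is that the paper fixes the constant as $N-1$ and works with the eigenvalues of $\Sigma_\dmat$ throughout, whereas you carry the eigenvalue $\mu_i$ of $\dmat\dmat^T$ and convert at the end; the bookkeeping is identical.
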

\begin{proof}
    See Appendix \ref{app:a}.
\end{proof}
PLS optimizes symmetrically over $\msg$ and $X$, where the additional optimization parameter $v$ of PLS yields a low-dimensional representation of $\msg$. However, since $\msg$ is a design variable of the experiment, we are only interested in the projection $u$ of the data $X$. In summary, IR can be viewed as: (1) a message-dependent formulation of PCA, (2) an asymmetric, sequential formulation of CCA, or (3) an asymmetric, correlation-based formulation of PLS. 


\subsection{Related methods: Dimension reduction in neuroscience}
dPCA~\cite{kobak2016demixed} and mTDR~\cite{aoi2018model} are respective  adaptations of PCA and regression for message-dependent dimension reduction in the context of neuroscience. dPCA `demixes' the contributions of different messages (e.g., experimental stimuli, animal behavioral features, and time indices) in the low-dimensional representation of neural activity~\cite{ kobak2016demixed}. dPCA achieves this separation by first decomposing the original neural data into components marginalized with respect to a specific message (or grouping of messages), then performs a low-rank matrix factorization on each component with the goal of minimizing overall reconstruction error~\cite{kobak2016demixed}, akin to regular PCA on the marginalized covariance. This differs from IR, which performs dimension reduction with $\msg$ directly, rather than a representation of $\msg$ derived from marginalized activity. 

mTDR is a regression-based method ~\cite{aoi2018model}. mTDR assumes a high-dimensional linear regression model of neural activity in response to multiple messages for each neuron and each time point, then performs a low-rank matrix factorization of the model. Unlike IR, mTDR finds all dimensions simultaneously. Although this may be advantageous for evaluating the entire low-dimensional model, the $\msg$-relevance of specific dimensions may be spread across multiple dimensions. Because IR is necessarily sequential, each dimension can be optimized for $\msg$-relevance. As shown in Sec. \ref{sec:infer_m_comm}, this may be crucial for inferring an accurate summarization of communication.

dPCA and mTDR are the primary message-dependent techniques in a large literature on dimension reduction in neuroscience~\cite{cunningham2014dimensionality}. While other techniques do not explicitly account for a message, they still discuss communication as low-dimensional interactions between neural populations by using various correlational measures~\cite{semedo2014extracting, cowley2017distance, gokcen2022disentangling}. Some of these works~\cite{gokcen2022disentangling} can be viewed as dimension-reduced extensions of Granger causality~\cite{ding2006granger} and directed information, which are also correlational measures of interaction, and have been of interest within information-theoretic literature on neuroscience~\cite{quinn2011estimating}. Nonlinear approaches in the form of autoencoders have also been explored, ranging from shallow DSAEs~\cite{qiu2018denoising} to deep, sequential networks~\cite{pandarinath2018inferring}. Although not message-dependent, these approaches may offer advantages in more flexible, nonlinear modeling.

\section{$M$-Forwarding}\label{sec:relay}
We now propose $\msg$-forwarding as a formal definition to embody the concept of how a message $\msg$ is forwarded between two disjoint populations $X_1\in\R^{\nx}$ and $X_2\in\R^{\ny}$, for $\nx$ neurons in $X_1$ and $\ny$ neurons in $X_2$. Three scenarios are possible: $X_1$ forwards $\msg$ to $X_2$, $X_2$ forwards $\msg$ to $X_1$, or neither (i.e., no forwarding occurs or both forward to each other). Inspired by the measure for $M$-Information Flow on an edge as positive conditional mutual information~\cite{venkatesh2020information}, we distinguish between the first two scenarios by formalizing what it means for one area to forward a message to another:
\begin{defi}[$M$-forwarding\footnote{Definition \ref{def:m_comm} is the same as that of a physically degraded broadcast channel.}]\label{def:m_comm}
    For $M\in\R$, $C_1\in \R^{n_1}$, and $C_2\in \R^{n_2}$, where $n_1, n_2 \in \mathbb{N}$, we say $C_1$ $M$-forwards to $C_2$ if the Markov chain $M - C_1 - C_2$ holds, i.e., $I(M; C_2|C_1) = 0$.
\end{defi}
We emphasize that Definition \ref{def:m_comm} requires \emph{all} the information $C_2$ has about $\msg$ must be available in $C_1$ in order to convey the intuitive idea of message forwarding\footnote{We note that this interpretation is not accurate if $C_2$ is an invertible and deterministic function of $C_1$, but is an unlikely scenario in neural data.}. We note that $M$-forward is a notion of communication that is built on a dependence measure, unlike previous works\cite{gokcen2022disentangling, ding2006granger}, whose notions of communication rest on correlational measures.

\section{Experimental results}\label{sec:experiments}
We apply our proposed methodology to data we collected to investigate activity of two neural populations in the whisker-based sensory detection system in mice. We show how IR, dPCA, mTDR, and DSAE applied to each population reveal correlations across the trial. Then, we infer the absence of $\msg$-forwarding on the dimension-reduced populations. We provide additional comparisons to PCA and PLS in Appendix \ref{app:c}.

\subsection{Experiment design and data collection}\label{sec:experiments_a}
We consider a simplified network of whisker-based sensory detection consisting of the primary somatosensory cortex (whisker S1) and the superior colliculus (SC). Whisker S1 is a specialized neural population in the mouse cortex that processes tactile stimuli from whisker deflections~\cite{staiger2021neuronal}. SC is a sub-cortical deep-brain region that is also involved in sensory processing~\cite{hong2018sensation}.  Anatomical evidence has shown that SC receives direct, bottom-up input from the brainstem as well as top-down modulation from S1~\cite{aronoff2010long}. The former is a direct transmission of $\msg$ to SC and happens within 15ms of whisker stimulation, while whisker tactile information likely reaches SC from S1 in the 15-40ms window after stimulation\cite{castro2016whisker}. However, it is unclear whether SC $\msg$-forwards to S1 in this interval~\cite{gharaei2020superior}. The network is shown in Fig. \ref{fig:trial_struct} on the left.

\begin{figure}[htbp]
  \centering
  \includegraphics[width=0.2\textwidth]{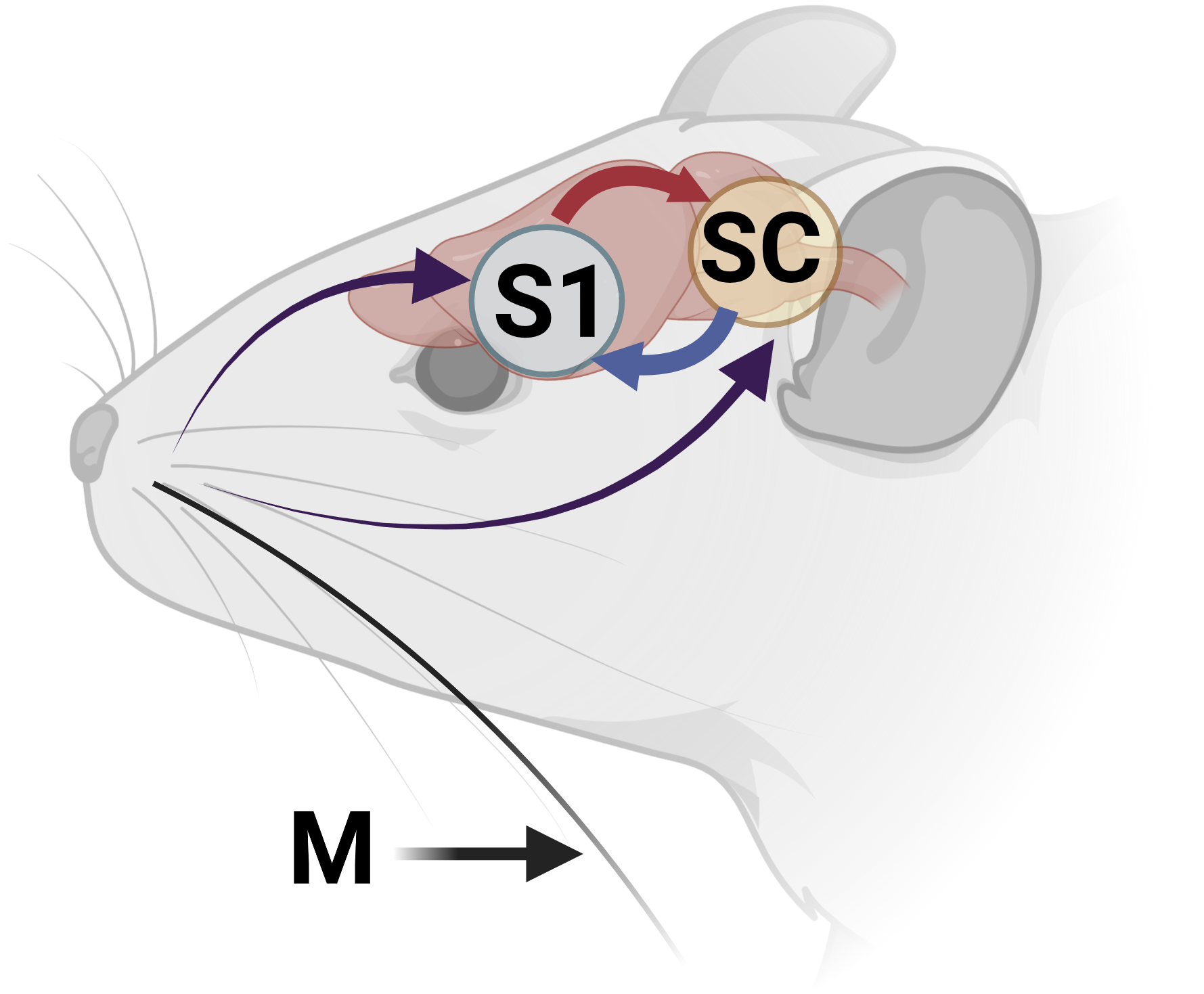}
  \includegraphics[width=0.28\textwidth]{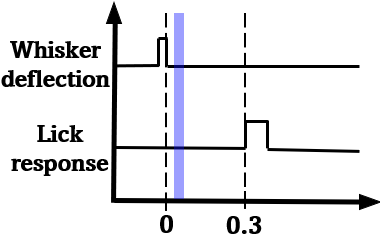}\vspace{-0.1in}
  \caption{(Left) One whisker is deflected while population activity in S1 and SC is recorded. We infer the existence of the red and blue arrows from S1 to SC and SC to S1, respectively. (Right) Trial structure. The whisker is deflected at 0s and the response window opens at 0.3s. The blue shaded region depicts the window between 15-40ms after deflection in which communication is posited to occur, and where we test for $\msg$-forwarding.}
  \label{fig:trial_struct}
\end{figure}
We conducted a pilot experiment on one mouse to understand how S1 and SC interact with respect to activation from whisker deflections. The message $\msg$ is a stimulus taking discrete values that represent varying angles of whisker deflection designed to invoke near-certain detection (large deflection angle), ambiguous detection (medium to small deflection angle), and near-certain absence of detection (zero deflection). The mouse is trained to respond as accurately as possible by licking left or right, and receives a reward for correct detection. Each trial consists of two key events: whisker deflection, followed by licking response (Fig. \ref{fig:trial_struct} right). We investigate the possibility of low-dimensional $\msg$-forwarding in the 15-40ms interval after deflection, as evidence indicates that $\msg$ may be forwarded from S1 to SC within this period~\cite{castro2016whisker}. Appendix \ref{app:b} provides further details about the experiment and data analysis.

\subsection{Marginal $M$-relevance in S1 and SC}\label{sec:pca_ir_1pop}
Fig.~\ref{fig:s1_comparison} illustrates how each method captures $\msg$-relevance in S1. IR, dPCA, and mTDR capture similar $\msg$-relevant activity in their first dimension. DSAE is the only method that is not message-dependent, and this is visibly apparent in smaller and noisier correlations in each dimension. As in S1, IR captures $\msg$-relevant activity in SC in its top dimension, highlighted in three clear peaks (Fig.~\ref{fig:sc_comparison}). Peak 1 is recognized as neural activity responsive to stimulation~\cite{castro2016whisker}. Peaks 2 and 3 likely represent arousal and motor activity related to licking~\cite{drager1975physiology}. The first mTDR dimension identifies a similarly strong peak 2, but diminished peak 1 relative to IR. Peak 3 is evident only in later dimensions (it does not appear as a distinguishable feature in the first dimension). Peaks 1 and 2 are also present in dPCA's first dimension, but are noisier than in IR. Besides peak 1, DSAE shows almost no $\msg$-relevance in SC.

Due to the heterogeneous activity of neurons in SC~\cite{drager1975physiology}, low-dimensional correlations in SC exhibit higher variation relative to S1. While whisker S1 neurons are primarily activated by whisker deflection only, SC is more functionally diverse. $\msg$-relevant activity in SC is not isolated to whisker stimulation~\cite{drager1975physiology, lee2021striatal}, as indicated by the three peaks. IR demonstrates robustness to this noise by maintaining clear $\msg$-relevance in its first dimension. In Appendix~\ref{app:c}, we further show that IR reveals more nuanced activity than the peristimulus time histogram (PSTH), which is another method commonly used to visualize relevant population activity.

\begin{figure}[htbp]
  \centering
  \includegraphics[width=0.45\textwidth]{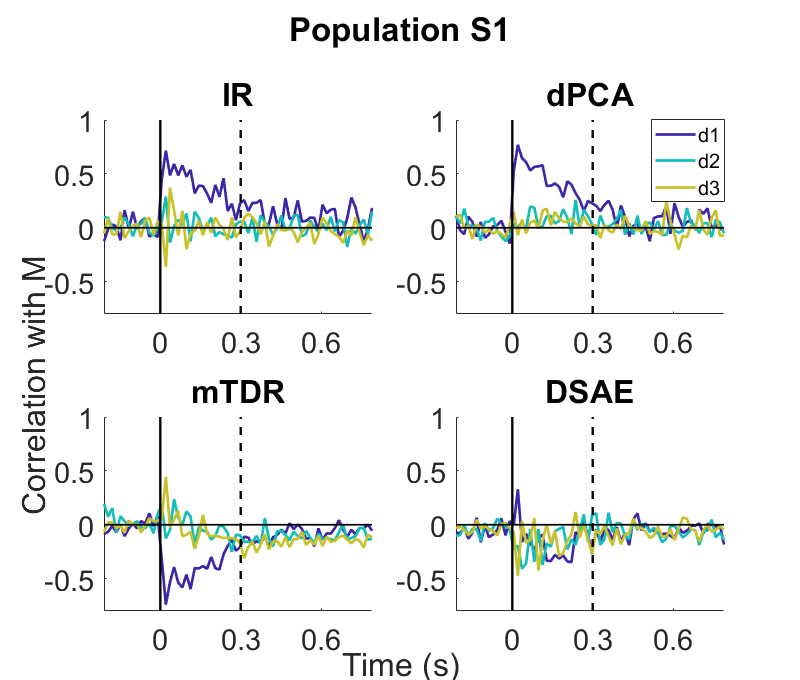}
  \vspace{-0.1in}
  \caption{$\msg$-correlations of low-dimensional representations of S1 for the top three dimensions, d1, d2, and d3. The solid vertical line at 0s marks the time of whisker deflection and the dashed line at 0.3s marks response start.}
  \label{fig:s1_comparison}
\end{figure}

\subsection{Inferring low-dimensional dependencies}\label{sec:infer_m_comm}
We present an application of IR for inferring the existence of a direct pathway between $\msg$ and low-dimensional representations of S1 and SC. As discussed in Sec.~\ref{sec:experiments_a}, we restrict our attention to the 15-40ms window after deflection. First, we identify the top IR dimension for the S1 dataset, $X_{S1}$, and the top IR dimension for the SC dataset, $X_{SC}$. We project onto these linear dimensions, which yields the single `conceptual neuron' $A$ of S1 and $B$ of SC that represent population-level activity most correlated to $\msg$ in the relevant window. Then, we test for dependence between $\msg$ and $A$, as well as $\msg$ and $B$ to verify that the dimension-reduced data still retain relevance with the message~\cite{kraskov2004estimating}. Afterwards, we perform two tests of conditional dependence~\cite{runge2018conditional} for both possible $M$-forward structures: $\msg - A - B$ and $\msg - B - A$. This analysis is repeated for dPCA, mTDR, and DSAE.

Because the null hypothesis of both tests is (conditional) \textit{independence}, rejecting the null of both independence and conditional independence implies the existence of a \emph{direct} pathway from $\msg$. For example, for the data to suggest a direct pathway from $\msg$ to $A$, we would need to reject the null hypotheses $A \indep M$ and $A \indep M \;|\; B$. In summary, rejection indicates that $\msg$ is \emph{not} forwarded from the other conceptual neuron. The data are binned into 15ms intervals, requiring two hypothesis tests to cover the window between 15-40ms after stimulation. 
We test at a level of $\alpha = 0.05$ and apply Bonferroni correction on the estimated p-values. Table \ref{tab:cit_summary} summarizes the results from each hypothesis test, where the top row lists the null hypothesis. $A$ is the dimension-reduced representation of S1 and $B$ is the dimension-reduced representation of SC. `S' means the p-value is significant (i.e., we reject the null hypothesis) and (conditional) dependence can be concluded, while `NS' means not significant and the test is inconclusive. The p-values of each test are reported in Table \ref{tab:p_values} in Appendix \ref{app:c}.

\begin{table}[htbp]
  \caption{Hypothesis tests in 100ms window after deflection}
  \label{tab:cit_summary}
    \centering
    \begin{tabular}{|c|c|c|c|c|}
        \hline
        Method & $A\indep M$ & $B\indep M$ & $A\indep M | B$ & $B \indep M | A$ \\
        \hline\hline
        IR & \emph{S} & \emph{S} & \emph{S} & NS
        \\\hline
        dPCA & \emph{S} & \emph{S} & \emph{S} & NS
        \\\hline
        mTDR & \emph{S} & NS & \emph{S} & NS
        \\\hline
        SAE & \emph{S} & NS & \emph{S} & NS
        \\\hline
    \end{tabular}
\end{table}

\begin{figure}[htbp]
  \centering
  \includegraphics[width=0.43\textwidth]{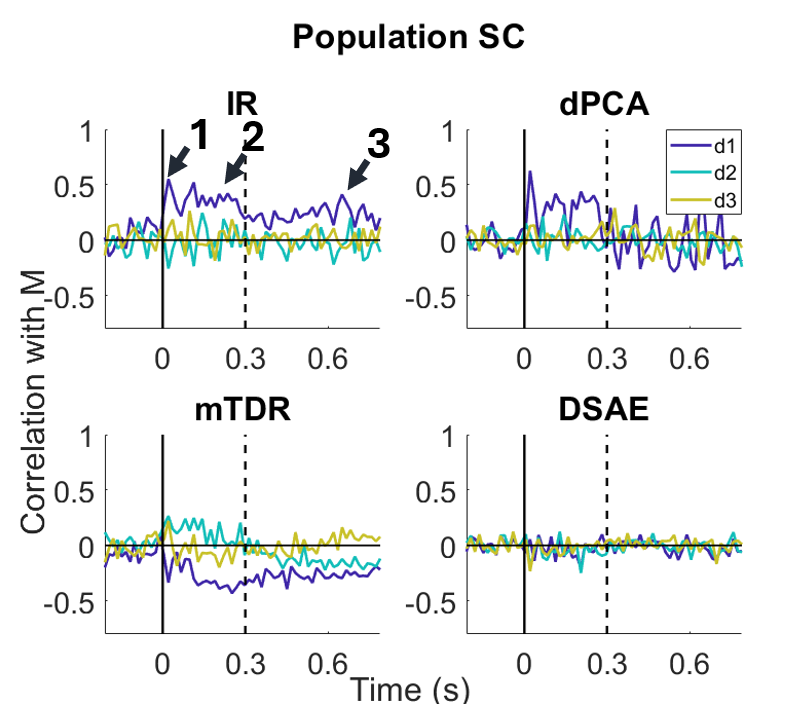}\vspace{-0.1in}
  \caption{$\msg$-correlations of low-dimensional representations of SC for the top three dimensions. Arrows 1, 2, and 3 in IR mark three distinct peaks identified by IR.}
  \label{fig:sc_comparison}
\end{figure}

The hypothesis tests using data pre-processed with IR allow us to confirm that its reduced representations of S1 and SC indeed remain unconditionally dependent with $\msg$ (consistent with visual observation from Fig.~\ref{fig:sc_comparison}). Furthermore, there exists a direct pathway from $\msg$ to $A$. Although this conclusion is made on a low-dimensional representation of S1, it is consistent with existing biological evidence~\cite{hong2018sensation}. There is not enough evidence for the test $B\indep M | A$ to be conclusive, suggesting the possibility of weak or nonexistent $\msg$-forwarding from SC to S1 in this window relative to the evidence collected for S1. Hypothesis tests with dPCA reveal the same structure as IR, but estimate a larger p-value than for IR (Appendix~\ref{app:c}). mTDR and DSAE also detect (un)conditional dependence of S1 with $\msg$, but are unable to conclude that SC depends on $\msg$, even unconditionally.



\section{Discussion}
We propose a new framework for inferring low-dimensional communication between two neural populations that identifies $\msg$-forwarding structures in IR-reduced dimensions. IR may be viewed as a pre-processing step which effectively defines noise relative to the message. Neural activity found to be irrelevant to $\msg$ is removed, even if that activity is part of another biological process. IR is one approach to $\msg$-relevant dimension reduction that is useful when the message is known and measurable, a condition often satisfied in many experimental designs. However, IR is not suitable for settings in which the message is ambiguous, even when the question is one of interaction. For example, following traumatic brain injuries, pathological areas of silence can be `communicated' to other brain areas, where the definition of the message becomes unclear\cite{chamanzar2021neural}.



The alignment of IR conclusions with previous biological results suggests its treatment of noise is effective for inferring an accurate summary of communication. This highlights the importance of choosing an appropriate measure of relevance to address the scientific question.
In principle, our framework could employ other measures of $\msg$-relevance, such as mutual information or distance covariance~\cite{cowley2017distance}, and could extend to nonlinear models with kernel forms. Here, we compute $\msg$-forwarding using only one dimension. Future work will investigate $\msg$-forwarding computed over a larger set of dimensions.

\section*{Acknowledgment}
We thank Aswin Sankaranarayanan for helpful discussions.

\newpage{}

\bibliographystyle{IEEEtran}
\bibliography{Bibliography}

\appendices
\newpage
\section{} \label{app:a}
\noindent
\textbf{Regression is LMMSE and correlation-maximizing vector.}
\\
We show that the vector that maximizes sample correlation also achieves the LMMSE estimate. 
\begin{itemize}
    \item Let $\hat{M} = u^TX$ be a linear approximation of $M$ from $X$ for $u \in\R^n$, where we assume $M$ and $X$ are zero mean. Then, the LMMSE vector is given by
    \begin{align}
        v &= \argmin_u \expv[(\hat{M} - M)^2] \nonumber
        \\
        &= \argmin_{u} \expv[(u^TX - M)^2]. \label{eq:lmmse}
    \end{align}
    The solution to~\eqref{eq:lmmse} is $v = \Sigma_X^{-1}\Sigma_{XM}$.
    \item Let $r(u^TX, M)$ be the sample correlation coefficient between a linear projection of $X$ and $M$. The optimization from~\eqref{eq:max_corr} is re-expressed as
    \begin{align}
        v &= \argmax_u r(u^TX, M) \nonumber
        \\
        &= \argmax_u \frac{u^TXM}{\sqrt{\sigma_m^2} \|u^TX\|}.\label{eq:corr_opt_full}
    \end{align}
    Perform a change of variable by letting $c = X^Tu$. Then,~\eqref{eq:corr_opt_full} becomes
    \begin{align}
        v &= \argmax_c \frac{c^TM}{\|c\|}. \label{eq:corr_var_change}
    \end{align}
    The problem in~\eqref{eq:corr_var_change} is optimized when $c = M$. Using this in our change of variable expression, $v = (XX^T)^{-1}XM = \Sigma_X^{-1}\Sigma_{XM}$.
\end{itemize}
Thus, IR can also be interpreted as finding the best average linear approximation to $\msg$ at each iteration. Theorem~\ref{thm:sufficiency} provides an additional interpretation of IR vectors as the sufficient representation of $\msg$ when $X$ and $M$ are jointly Gaussian.

\begin{theorem}\label{thm:sufficiency}
    Suppose $M \in \R$ and $X \in \R^n$, and $P(M, X) = \mathcal{N}(0, \Sigma)$, where $X$ has marginal covariance $\Sigma_X$. If $\Sigma_X$ is full rank and $v = \gamma \Sigma_{X}^{-1}\Sigma_{XM}$ for some constant $\gamma \neq 0$, then $X$ projected onto $v$ is a sufficient representation of $M$. In other words, the Markov chain $M - v^TX - X$ holds.
\end{theorem}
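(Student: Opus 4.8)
The plan is to exploit the fact that, for jointly Gaussian variables, conditional independence is completely determined by the first two conditional moments, and to show that the projection $v^T X$ is, up to the nonzero scalar $\gamma$, exactly the conditional mean $\expv[\msg \mid X]$. Establishing the Markov chain $M - v^T X - X$ is equivalent to proving $\msg \indep X \mid v^T X$, so it suffices to show that the conditional law of $\msg$ given $X$ depends on $X$ only through $v^T X$.

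First I would write down the Gaussian conditional distribution of $\msg$ given $X$. Since $(\msg, X)$ is zero-mean jointly Gaussian with $\Sigma_X$ full rank, standard Gaussian conditioning gives
\begin{align}
    \msg \mid X \sim \mathcal{N}\!\left(\Sigma_{MX}\Sigma_X^{-1} X,\; \sigma_m^2 - \Sigma_{MX}\Sigma_X^{-1}\Sigma_{XM}\right).
\end{align}
The crux of the argument is that this conditional mean is a scalar multiple of $v^T X$. Using $v = \gamma \Sigma_X^{-1}\Sigma_{XM}$ and the symmetry of $\Sigma_X$ (hence of $\Sigma_X^{-1}$),
\begin{align}
    v^T X = \gamma\, \Sigma_{XM}^T \Sigma_X^{-1} X = \gamma\, \Sigma_{MX}\Sigma_X^{-1} X = \gamma\, \expv[\msg \mid X],
\end{align}
so that $\expv[\msg \mid X] = \tfrac{1}{\gamma} v^T X$, which is well defined because $\gamma \neq 0$.

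Next I would combine the two displays. The conditional variance above is a constant independent of $X$, while the conditional mean depends on $X$ only through $v^T X$. Therefore the entire conditional density $p(\msg \mid X)$ depends on $X$ only through the scalar $v^T X$, i.e. $p(\msg \mid X) = p(\msg \mid v^T X)$. Since $v^T X$ is a deterministic function of $X$, conditioning additionally on $X$ adds nothing, so $p(\msg \mid X, v^T X) = p(\msg \mid v^T X)$, which is precisely $\msg \indep X \mid v^T X$, the desired Markov chain.

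I expect the main obstacle to be expository rather than technical: one must justify cleanly that ``the conditional density depends on $X$ only through $v^T X$'' yields the conditional-independence statement, given that $v^T X$ is a degenerate (deterministic) function of $X$ rather than a separate observable. I would handle this by writing $p(\msg \mid X) = p(\msg \mid X, v^T X)$ trivially and then marginalizing over $X$ given $v^T X$ to get $p(\msg \mid v^T X) = \expv_{X \mid v^T X}\!\left[\,p(\msg \mid X, v^T X)\,\right] = p(\msg \mid X, v^T X)$. As a check I would also note the slicker innovations route: for Gaussians the orthogonality principle gives $\msg = \expv[\msg \mid X] + W$ with $W \indep X$; since $\expv[\msg \mid X] = \tfrac{1}{\gamma} v^T X$ is a function of $v^T X$ and $W$ is independent of $X$, this exhibits $\msg$ as a function of $v^T X$ plus noise independent of $X$, yielding the Markov chain directly.
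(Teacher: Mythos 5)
Your proof is correct, but it takes a genuinely different route from the paper's. The paper works with the joint Gaussian vector $(M, X, v^TX)$, applies the Gaussian conditioning formula to compute the full conditional covariance $\Sigma_C$ of $(M,X)$ given $v^TX$, and shows that the cross-covariance block vanishes exactly when $v^T\Sigma_X v\, \Sigma_{MX} = \Sigma_{MX} v\, v^T \Sigma_X$, a condition it then verifies for $v = \gamma\Sigma_X^{-1}\Sigma_{XM}$; the conclusion follows because zero conditional cross-covariance implies conditional independence for jointly Gaussian variables. You instead identify $v^TX = \gamma\,\expv[M \mid X]$ and argue that the conditional law of $M$ given $X$ (constant variance, mean a function of $v^TX$ alone) factors through $v^TX$, which is the classical sufficiency/factorization argument; your innovations variant ($M = \tfrac{1}{\gamma}v^TX + W$ with $W \indep X$) is an equally valid shortcut, noting that $W \indep X$ together with $v^TX$ being a deterministic function of $X$ does give $W \indep X \mid v^TX$. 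Each approach buys something: the paper's computation produces an explicit necessary-and-sufficient condition \eqref{eq:suff_condition} on $v$ (so one can see which other projections would also be sufficient), whereas your argument is shorter, explains \emph{why} the regression direction is the right one (it is the conditional-mean direction), and extends beyond the Gaussian case to any model in which $M - \expv[M\mid X]$ is independent of $X$. Both are complete proofs; there is no gap in yours.
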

\begin{proof}
    We first define the relevant distributions: $P(M, X)$, $P(M, X, v^TX)$, and $P(M, X | v^TX)$. Then, we derive the condition for sufficiency and show that the condition holds for $v=\gamma \Sigma_X^{-1}\Sigma_{XM}$, i.e. $I(M; X|v^TX) = 0$. 
    \\
    For $P(M, X)$, define its covariance matrix as
    \begin{align*}
        \Sigma = 
        \begin{bmatrix}
            \sigma_M^2 & \Sigma_{MX}
            \\
            \Sigma_{XM} & \Sigma_X
        \end{bmatrix}
    \end{align*}
    where $\Sigma_X \in \R^{n\times n}$ is the marginal covariance matrix of $X$ and $\Sigma_{XM}\in\R^n$ is the cross covariance vector of $X$ and $M$. Now, let $P(M, X, v^TX) = \mathcal{N}(\mu_0, \Sigma_0)$. The covariance $\Sigma_0$ is
    \begin{align*}
        \Sigma_0 = 
        \begin{bmatrix}
            \sigma_M^2 & \Sigma_{MX} & \Sigma_{MX}v
            \\
            \Sigma_{XM} & \Sigma_X & \Sigma_X v
            \\
            v^T\Sigma_{XM} & v^T\Sigma_X & v^T\Sigma_Xv
        \end{bmatrix}.
    \end{align*}
    Finally, we derive the conditional distribution $P(M, X|v^TX)$. Using properties of multivariate Gaussian distributions, the covariance $\Sigma_C$ of the conditional distribution is
    \begin{align*}
        \Sigma_C &= 
        \begin{bmatrix}
            \sigma_m^2 & \Sigma_{MX}
            \\
            \Sigma_{XM} & \Sigma_X
        \end{bmatrix}
        -
        \begin{bmatrix}
            \frac{\Sigma_{MX}v}{v^T\Sigma_Xv} \\ \frac{\Sigma_X v}{v^T\Sigma_Xv}
        \end{bmatrix}
        \begin{bmatrix}
            v^T\Sigma_{XM} & v^T\Sigma_X
        \end{bmatrix}
        \\
        &= 
        \begin{bmatrix}
            \sigma_m^2 - \frac{\Sigma_{MX}vv^T\Sigma_{XM}}{v^T\Sigma_Xv} & \Sigma_{MX} - \frac{\Sigma_{MX}vv^T\Sigma_X}{v^T\Sigma_Xv}
            \\
            \Sigma_{XM} - \frac{\Sigma_Xvv^T\Sigma_{XM}}{v^T\Sigma_Xv} & \Sigma_X - \frac{\Sigma_X vv^T \Sigma_X}{v^T\Sigma_X v}
        \end{bmatrix}.
    \end{align*}
    Because zero correlation implies independence in jointly Gaussian distributions, setting the off-diagonals of $\Sigma_C$ to zero implies $I(M; X|v^TX) = 0$. Without loss of generality, we derive the condition for sufficiency using the top right matrix. This yields the condition:
    \begin{gather}
        \Sigma_{MX} - \frac{\Sigma_{MX} vv^T \Sigma_X}{v^T\Sigma_Xv} = 0 \nonumber
        \\
        \Leftrightarrow v^T\Sigma_Xv \Sigma_{MX} = \Sigma_{MX}v v^T\Sigma_X. \label{eq:suff_condition}
    \end{gather}
    When $v = \gamma \Sigma_X^{-1}\Sigma_{XM}$, the LHS of~\eqref{eq:suff_condition} equals the RHS:
    \begin{align*}
        v^T\Sigma_Xv \Sigma_{MX} &= \gamma^2 \Sigma_{MX}\Sigma_X^{-1} \Sigma_X \Sigma_X^{-1}\Sigma_{XM}\Sigma_{MX} 
        \\
        &= \gamma^2 \Sigma_{MX}\Sigma_X^{-1}\Sigma_{XM} \Sigma_{MX}
        \\
        &= \gamma^2 \Sigma_{MX} \Sigma_X^{-1}\Sigma_{XM} \Sigma_{MX}\Sigma_X^{-1}\Sigma_X
        \\
        &= \Sigma_{MX}v v^T\Sigma_X.
    \end{align*}
\end{proof}

\noindent
\textbf{Proof of Theorem \ref{thm:pca_ir}.}
\begin{proof}
    The correlation coefficient is
    \begin{align*}
        \corr(p_i^T\dmat, \msg) = \frac{
            \sum_{j=1}^T \langle p_i, x_j  \rangle \msg_j
        }{
        \bar{s}_\msg \sqrt{\sum_{j = 1}^T \langle p_i, x_j \rangle^2}
        },
    \end{align*}
    where $x_j$ is the $j$th column of $\dmat$ and
    \begin{align*}
        \bar{s}_\msg = \sqrt{\sum_{j = 1}^T M_j^2}.
    \end{align*}
    The correlation coefficient is expressed in matrix form as
    \begin{align*}
        \corr(p_i^T\dmat, \msg) = \frac{p_i^T\dmat\msg}{\bar{s}_\msg  \|p_i^T\dmat\|}.
    \end{align*}
    The regression vector is $v = (\dmat\dmat^T)^{-1}\dmat\msg$. Hence, 
    \begin{align*}
        \dmat\msg = \dmat\dmat^T v = (N-1) \Sigma_\dmat v,
    \end{align*}
    where $\Sigma_\dmat$ is the sample covariance matrix, expressed in terms of $\dmat$ as
    \begin{align*}
        \Sigma_\dmat = \frac{\dmat\dmat^T}{N-1}.
    \end{align*}
    Note that the denominator of the correlation coefficient is $\|p_i^T\dmat\| = \sqrt{(p_i^T\dmat)(p_i^T\dmat)^T} = \sqrt{(N-1)p_i^T \Sigma_\dmat p_i}$. Making the appropriate substitutions for $\dmat M$ and $\|p_i^T\dmat\|$, the correlation coefficient can be simplified as
    \begin{align*}
        \corr(p_i^T\dmat, \msg) &= \frac{(N-1)p_i^T \Sigma_\dmat v}{\bar{s}_\msg  \sqrt{(N-1) p_i^T \Sigma_\dmat p_i}}
        \\
        &= \frac{(N-1) \lambda_i \langle v, p_i \rangle}{\bar{s}_\msg  \sqrt{(N-1)\lambda_i \langle p_i, p_i \rangle}}
        \\
        &= \frac{\sqrt{\lambda_i(N-1)}}{\bar{s}_\msg } \langle v, p_i \rangle,
    \end{align*}
    where the second equality is due to $p_i$ being an eigenvector of $\Sigma_\dmat$ with singular value $\lambda_i$. The first IR dimension is $v^* = v / \|v\|$. Projecting with respect to $v^*$ instead of $v$, the expression simplifies as
    \begin{align*}
        \corr(p_i^T\dmat, \msg) &= \alpha \sqrt{\lambda_i} \ \langle v^*, p_i \rangle,
    \end{align*}
    where $\alpha = \|v\| / s_\msg$ and $s_\msg$ is the standard deviation of $\msg$.
\end{proof}

\section{Experiment Methodology \& Data Analysis}\label{app:b}
\noindent
\textbf{Experiment methodology.}
The study protocols were approved by the CMU Institutional Animal Care and Use Committee. Neural spiking activity in one mouse was recorded at 1kHz in a sub-sampled population of neurons in S1 ($n = 38$) and SC ($n=35$) over a total of 208 trials. Each trial is a 5s recording in which the whisker is deflected (i.e., stimulation). The mice are trained to respond after a delay period of 300ms following stimulation to indicate whether it detected a deflection (right lick) or no deflection (left lick) for a water reward. In this paper, we consider the 1s interval between 200ms pre-stimulation and 800ms post-stimulation.

The whisker deflection values are from the set $\{0, 3, 4, 6, 10\}$. The value 0 means no deflection, 10 means max deflection, and the remaining values are designed to induce ambiguous detection ability. Deflection values $0, 3, 4, 6$ and $10$ are randomly presented with probabilities 0.4, 0.1, 0.1, 0.2, and 0.2, respectively. This is to include enough rewarded (i.e. unambiguous) trials to keep the mouse engaged for the duration of the experiment. Figure \ref{fig:trial_struct} (left) was created with BioRender.com.

\noindent
\textbf{Data analysis: pre-processing.}
Unresponsive neurons are first removed, where a neuron is considered unresponsive if it produces less than 15 spikes in the 100ms period after whisker stimulation over all trials with deflection value greater than 0. A total of 31 neurons in S1 and 25 neurons in SC remain after removing unresponsive neurons. The trials are binned into 15ms time bins. We perform 4-fold cross validation so each fold tests on 25\% of the dataset. Figs. \ref{fig:s1_comparison}, \ref{fig:sc_comparison}, and \ref{fig:pca_pls_comparison} show average correlations found on the test data over all folds. 

\noindent
\textbf{Data analysis: model parameters.}
We train mTDR over all time points as this is its model specification. For the remaining methods, we train a new model at each time point. Methods with additional special parameters are summarized below:
\begin{itemize}
    \item \emph{IR}. The regression vectors are found using the statsmodels OLS function in Python~\cite{seabold2010statsmodels}.
    \item \emph{dPCA}. The model is trained in Python with stimulus and time parameters, and without regularization. Three components are identified and their weights from the stimulus-only set are extracted.
    \item \emph{mTDR}. The optimal rank (i.e. number of dimensions) is found by mTDR's greedy estimation procedure. This finds 6 dimensions in S1 and 4 dimensions in SC. The top 3 representations are extracted. Two covariates are used to model the one-dimensional stimulus $\msg$ and a bias term. 
    \item \emph{DSAE}. The model closely follows the network given in~\cite{qiu2018denoising}. The number of units in the input layer is the number of responsive neurons, which varies between S1 and SC. There is only one hidden layer, which consists of three units. All activation functions are sigmoid functions. Our loss function is the mean-square error with a KL divergence loss which encourages sparsity. The KL sparsity parameter is set to $\rho=0.1$. The penalty weight on the KL loss is set to $\beta=0.01$. We apply a corruption operation to achieve a denoising autoencoder. The noise distribution is Poisson with rate parameter set for each input unit to the mean firing rate of the neuron in a given time bin. The noise ratio is set to 10\%, meaning 10\% of the input nodes are corrupted by Poisson noise. The model is trained over 300 epochs with a learning rate of 0.01. As there is no natural ordering of the units, we choose to order them by total absolute correlation with $\msg$ over the whole trial.
\end{itemize}

\noindent
\textbf{Data analysis: hypothesis tests.}
Both independence and conditional independence tests are based on projections obtained by training over the entire dataset in the relevant time window between 15-40ms after stimulation. For mTDR, we train the model over the whole trial before picking out projections from the relevant window. We use the pycit implementation (based on~\cite{yang2019model}) of k-nearest neighbors dependence and conditional dependence testing, drawing 10,000 samples from each population and using 5 neighbors.

\section{Additional comparisons}\label{app:c}
\noindent
\textbf{Peristimulus time histogram (PSTH).} The PSTH is a method commonly applied to spiking data for visualizing key events over the trial~\cite{shimazaki2007method}. A PSTH is computed by summing the total number of times a neuron, or population of neurons, spikes in a given time bin. Fig.~\ref{fig:psth} shows the PSTH computed over all trials and all neurons in S1 and SC. Aside from peak C in the PSTH of SC, no other peaks are immediately evident, in contrast to peaks 1, 2, and 3 in IR (Fig.~\ref{fig:sc_comparison}). This suggests that correlations in IR-projected data can reveal $\msg$-dependent neural activity that may not be evident in the PSTH. Another feature in Fig.~\ref{fig:psth} is peak B in S1. This is not evident in any of the dimension reduction methods in Fig.~\ref{fig:s1_comparison}. Rather, the correlations of the dimension-reduced data suggest that the entire period between the first and second peaks is highly correlated with the message. However, this may also highlight a limitation in using correlation as the measure of $\msg$-relevance, which is limited to capturing linear dependencies.
\begin{figure}[htbp]
  \centering
  \includegraphics[width=0.5\textwidth]{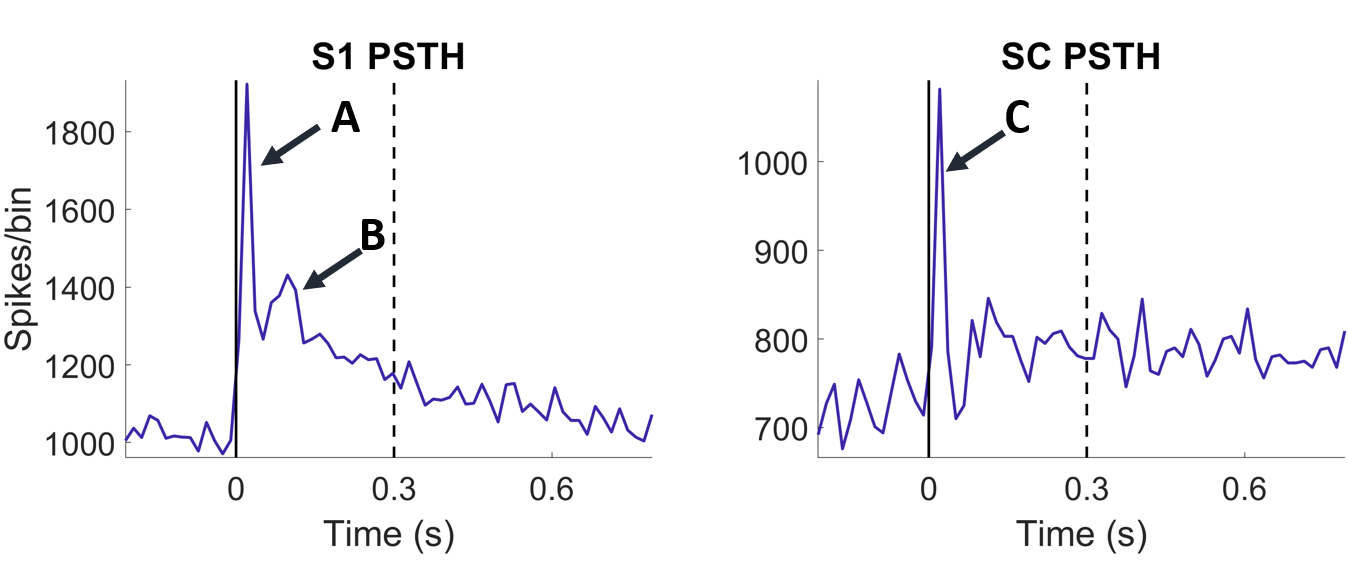}
  \vspace{-0.1in}
  \caption{PSTH for S1 (left) and SC (right) over all trials. The S1 PSTH exhibits two distinctive peaks, labeled A and B, while the SC PSTH only has one evident peak, labeled C.}
  \label{fig:psth}
\end{figure}

\noindent
\textbf{PCA and PLS comparison.}
We provide additional comparisons with PCA and PLS. In S1, the first dimension of both PCA and PLS identify a strong, decaying correlation with $\msg$, and minimal correlations in the other two dimensions. PLS identifies a longer period of $\msg$-relevance than PCA. In SC, PLS identifies the three peaks of $\msg$-relevance similarly to IR in Fig.~\ref{fig:sc_comparison}, albeit noisier. The only visibly identifiable feature in PCA is peak 1 at 0s. The remaining dimensions of PCA only detect small correlations.

\begin{figure}[htbp]
  \centering
  \includegraphics[width=0.45\textwidth]{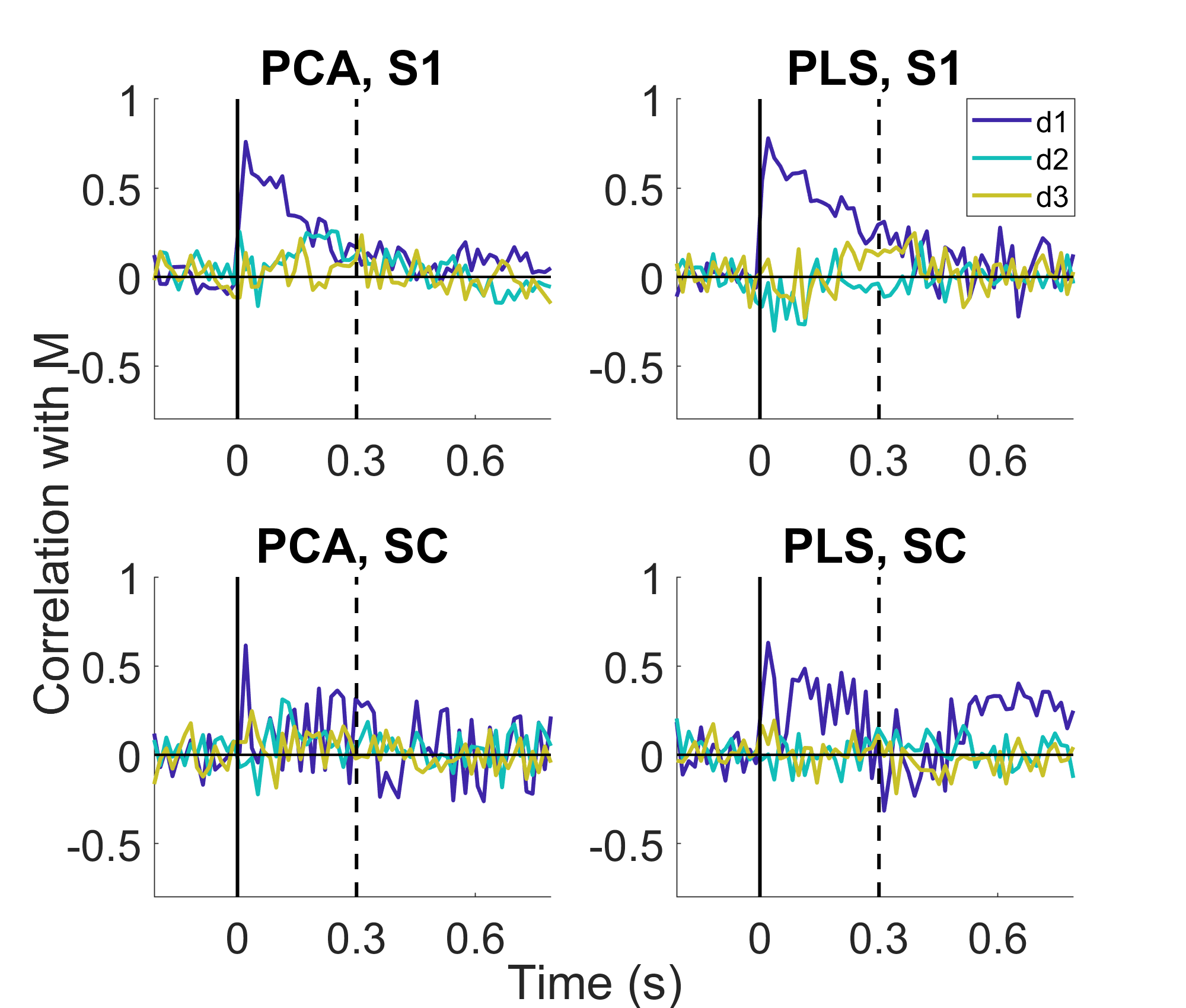}
  \vspace{-0.1in}
  \caption{$\msg$-correlations of low-dimensional representations of S1 and SC using PCA and PLS.}
  \label{fig:pca_pls_comparison}
\end{figure}


\noindent
\textbf{Table of p-values.}
\begin{table}[htbp]
    \caption{p-values for hypothesis tests}
    \label{tab:p_values}
    \centering
    \begin{tabular}{|c|c|c|c|c|c|}
    \hline
        Method & $t$ & {A$\indep\msg$} & {B$\indep\msg$} & {A$\indep\msg \;|\; $B} & {B$\indep\msg \;|\; $A} \\
        \hline\hline
        \multirow{2}{*}{IR} 
        & $p_1$ & $< 10^{-4}$ & $< 10^{-4}$ & $< 10^{-4}$ & $0.0027$
        \\
        & $p_2$ & $< 10^{-4}$ & $< 10^{-4}$ & $< 10^{-4}$ & $0.0364$
        \\\hline
        \multirow{2}{*}{dPCA} 
        & $p_1$ & $< 10^{-4}$ & $< 10^{-4}$ & $< 10^{-4}$ & $0.0374$
        \\
        & $p_2$ & $< 10^{-4}$ & $0.0023$ & $< 10^{-4}$ & $< 10^{-4}$
        \\\hline
        \multirow{2}{*}{mTDR} 
        & $p_1$ & $< 10^{-4}$ & $0.0004$ & $< 10^{-4}$ & $0.0101$
        \\
        & $p_2$ & $< 10^{-4}$ & $0.4376$ & $< 10^{-4}$ & $0.9801$
        \\\hline
        \multirow{2}{*}{DSAE} 
        & $p_1$ & $< 10^{-4}$ & $< 10^{-4}$ & $< 10^{-4}$ & $0.1551$
        \\
        & $p_2$ & $10^{-4}$ & $0.1723$ & $< 10^{-4}$ & $0.0004$
        \\\hline
        \multirow{2}{*}{PCA} 
        & $p_1$ & $< 10^{-4}$ & $< 10^{-4}$ & $< 10^{-4}$ & $0.0159$
        \\
        & $p_2$ & $< 10^{-4}$ & $0.7223$ & $< 10^{-4}$ & $0.5678$
        \\\hline
        \multirow{2}{*}{PLS} 
        & $p_1$ & $< 10^{-4}$ & $< 10^{-4}$ & $< 10^{-4}$ & $0.0023$
        \\
        & $p_2$ & $< 10^{-4}$ & $< 10^{-4}$ & $< 10^{-4}$ & $< 10^{-4}$
        \\\hline
    \end{tabular}
\end{table}

Table \ref{tab:p_values} summarizes the p-values for the two time bins covering the 15-40ms interval after stimulation. Column 2 of the top row indicates the time bin and columns 3-6 list the null hypothesis. $A$ and $B$ are the dimension-reduced representations of S1 and SC, respectively. Since we test at a level of $\alpha=0.05$ with Bonferroni correction, we reject the null hypothesis if each p-value is less than 0.025.

PCA identifies a direct pathway from $\msg$ to S1, but is inconclusive regarding dependence of SC and $\msg$. PLS is the only method that attains significance in all four hypothesis tests. This result means the low-dimensional PLS representation identifies a direct pathway from $\msg$ to S1 and $\msg$ to SC. While this functional structure is biologically plausible, direct connectivity from $\msg$ to SC has been found to occur within 15ms of the stimulus~\cite{castro2016whisker}, which is outside of the tested window.


\IEEEtriggeratref{4}

\end{document}